\newtheorem{remark}{Remark}
\newtheorem{corollary}{Corollary}
\newcommand{\widesim}[2][1.5]{
  \mathrel{\overset{#2}{\scalebox{#1}[1]{$\sim$}}}
}
\newtheorem{theorem}{Theorem}
\DeclareMathOperator*{\argmax}{arg\,max}
\begin{document}	
	\title{\hspace{-0.29cm}Exploiting Beam-Split in IRS-aided Systems via OFDMA}
	
\author{\IEEEauthorblockN{P. Siddhartha$^\dagger\hspace{0.001cm}^\P$,~L. Yashvanth$^\star\hspace{0.001cm}^\P$,~\IEEEmembership{Student Member,~IEEE}, and~Chandra R. Murthy$^\star$,~\IEEEmembership{Fellow,~IEEE}}
\IEEEauthorblockA{$^\dagger$Dept. of Electrical Engineering, Indian Institute of Technology, Delhi, India. Email: parupudisiddhartha@gmail.com}
\IEEEauthorblockA{$^\star$Dept. of Electrical Communication Engineering, Indian Institute of Science, Bangalore, India.
Email: \{yashvanthl, cmurthy\}@iisc.ac.in}
\thanks{$^{\P}$P. Siddhartha and L. Yashvanth contributed equally to this work.}}
\maketitle
\begin{abstract}
In wideband systems operating at mmWave frequencies, intelligent reflecting surfaces (IRSs) equipped with many passive elements can compensate for channel propagation losses. Then, a phenomenon known as the beam-split (B-SP) occurs in which the phase shifters at the IRS elements fail to beamform at a desired user equipment (UE) over the total allotted bandwidth (BW). Although B-SP is usually seen as an impairment, in this paper, we take an optimistic view and exploit the B-SP effect to enhance the system performance via an orthogonal frequency division multiple access (OFDMA). We argue that due to the B-SP, when an IRS is tuned to beamform at a particular angle on one frequency, it also forms beams in different directions on other frequencies. Then, by opportunistically scheduling different UEs on different subcarriers (SCs), we show that, almost surely, the optimal array gain that scales quadratically in the number of IRS elements can be achieved on all SCs in the system. We derive the achievable throughput of the proposed scheme and deduce that the system also enjoys additional multi-user diversity benefits on top of the optimal beamforming gain over the full BW. Finally, we verify our findings via numerical simulations. 
 \end{abstract}
\begin{IEEEkeywords}
	Intelligent reflecting surfaces, opportunistic communications, OFDMA, beam split and beam squint, multi-user diversity.
\end{IEEEkeywords}
\section{Introduction}\label{sec:intro}

Intelligent reflecting surfaces (IRSs) are envisioned to enhance the spectral and energy efficiency of wireless systems by reflecting signals in required directions through adjustable phase shifters-enabled passive elements~\cite{IRS_phase_shift,narrowband-2, narrowband-3}. Specifically, at high frequencies, such as mmWaves, IRSs can compensate for larger propagation losses~\cite{Yashuai_PIMRC_2020}. In that case, when a wide bandwidth (BW) is allotted to a user equipment (UE), a phenomenon known as the \emph{beam-split} (B-SP) effect arises, wherein the IRS phase shifters cannot form a beam at the desired UE over the full allotted BW~\cite{VanTress_Array_Proc_2002,beam-squint-1, beam-squint-2, beam-squint-3}. This paper proposes to exploit the B-SP positively and enhance the system throughput via orthogonal frequency division multiple access (OFDMA)

Large IRS apertures are used in high-frequency wideband systems to provide significant beamforming gains at the UE. However, this also makes the channel delay spread monotonically increase with the number of IRS elements and comparable to the system sampling time, known as the spatial wideband effect~\cite{Feifei_Dual_wideband_TSP_2018}.  In such a scenario, the IRS phase shifters, which are primarily used for narrowband beamforming, fail to beamform at a given UE when multiple subcarriers (SCs) spanning a large BW are allotted to that UE, causing the B-SP effect~\cite{VanTress_Array_Proc_2002,beam-squint-1, beam-squint-2, beam-squint-3}. This, in turn, degrades the achievable array gain from the IRS. So, conventional approaches attempt to mitigate the B-SP effects and obtain full beamforming gain on all SCs allotted to any UE. For e.g., in~\cite{Jiang_WCNC_2021_BSQ}, the authors reduce the impact of the B-SP effect in OFDM systems by optimizing IRS phase configurations using appropriate cost functions. In~\cite{Derrick_Wing_Kwan_TCOM_2023,Wanming_TVT_2023,Zhao_WCL_2024,Octavia_CL_2022}, true time-delay (TTD) units are used at each IRS element that equalizes the delay introduced by the channel at that element, in turn, eliminating the B-SP effects. These approaches are computationally expensive or incur hardware and energy costs to precisely control the delays.

In contrast to the works mentioned above, we show that it is possible to positively exploit the naturally occurring B-SP effect and enhance the system throughput. In particular, although the B-SP effect prevents the IRS from forming a beam at a single angle over the full BW, by energy conservation, it forms beams at different angles on across the SCs. Using this B-SP property, we propose to leverage multi-user diversity and improve the system performance through an OFDMA framework. Although~\cite{Beixiong_TWC_2020_OFDMA,Yang_WCL_2020_OFDMA} address some of the problems in IRS-assisted OFDMA, they focus on sub-$6$ GHz bands where B-SP effects are not significant. Considering the mmWave bands, our key contributions are:
\begin{enumerate}[leftmargin=*]
\item Using a max-rate scheduler in OFDMA, and when the IRS configurations are randomly sampled from an appropriate distribution, we show that, on every SC, the probability that at least one UE will achieve nearly full beamforming gain from the IRS increases as the number of UEs increases (See Theorem~\ref{thm_prob_success}.)
\item We derive the achievable system throughput and show that a peak data rate that scales log-quadratically in the total number of IRS elements can be obtained on all SCs in addition to multi-user diversity benefits, with a large number of UEs. (See Theorem~\ref{thm:Rate_scaling}.)
\end{enumerate}
We empirically illustrate our findings and show that the proposed approach exploits B-SP effectively and provides full beamforming gain on all SCs, in turn enhancing the overall spectral efficiency.

\section{System Model and Problem Statement}\label{sec:sys_model}

We consider a downlink wideband mmWave OFDM system with $K$ UEs and $N$ SCs, which span a BW equal to $W$. The system is assisted by an $M$-element IRS implemented as a uniform linear array (ULA) of inter-element spacing $d$. The channel impulse response in the baseband domain from the BS to $m$th IRS element is given by~\cite{beam-squint-1}
\begin{equation}\label{eqn:1}
    g_m(t) =  \bar{\alpha} e^{-j2\pi f_c \tau_{m}^{\mathsf{Tx}}} \delta(t-\tau_{m}^{\mathsf{Tx}}),
\end{equation}
where $f_c$ is the carrier frequency, $\bar{\alpha}$ represents the path loss, and  $\tau_{m}^{\mathsf{Tx}}$ is the delay in the channel from BS to $m$th IRS element. 
Similarly, the channel impulse response from the $m$th IRS element to UE-$k$ is
\begin{equation}\label{eqn:2}
    u_{m,k}(t) = \bar{\beta}_{k} e^{-j2\pi f_c \tau_{m,k}^{\mathsf{Rx}}} \delta(t-\tau_{m,k}^{\mathsf{Rx}}),
\end{equation}
where $\bar{\beta}_{k}$ represents the path loss, and $\tau_{m,k}^{\mathsf{Rx}}$ is the delay in the channel from $m$th IRS element to UE-$k$.
 So, the received signal at UE-$k$ is\footnote{We assume only one propagation path in the BS-IRS and the IRS-$\text{UE-}k$ channels and neglect the direct path between the BS and the UEs due to the high attenuation losses faced by the mmWave signals \cite{pathloss},\cite[Sec.~IV.A]{Yashvanth_TCOM_2024}.}
\begin{equation}
    y_k(t) = \sum_{m=1}^M \left(e^{j\phi_m} (u_{m,k}(t)*g_m(t))\right) * s(t) + n_k(t),
\end{equation} 
where $s(t)$ is the transmitted signal from the BS, $\phi_m$ is the phase shift introduced at the $m$th IRS element, and $n_k(t)$ is the additive noise at UE-$k$. Thus, the end-to-end channel from BS to UE-$k$ is
\begin{equation}\label{eqn:5}
    h_{k}(t) =  \sum\nolimits_{m=1}^{M} e^{j\phi_m} 
     \bar{\alpha}\bar {\beta}_{k} e^{-j2\pi f_c (\tau_{m}^{\mathsf{Tx}} + \tau_{m,k}^{\mathsf{Rx}})} \delta(t- \tau_{m}^{\mathsf{Tx}} - \tau_{m,k}^{\mathsf{Rx}}). 
\end{equation}
For simplicity, we define $\tau^{\mathsf{Tx}} \triangleq \tau_{1}^{\mathsf{Tx}}$ and $\tau_{k}^{\mathsf{Rx}} \triangleq \tau_{1,k}^{\mathsf{Rx}}$. Then, we can show that the channel delays in~\eqref{eqn:1} and~\eqref{eqn:2} can be decomposed as~\cite{beam-squint-1}
\begin{align}
    \tau_{m}^{\mathsf{Tx}}& = \tau^{\mathsf{Tx}} + (m-1)\frac{d\sin(\chi)}{c} = \tau^{\mathsf{Tx}} + (m-1)\frac{\varphi^{\mathsf{Tx}}}{f_c}, \label{eqn:6}\\
    \tau_{m,k}^{\mathsf{Rx}} &= \tau_{k}^{\mathsf{Rx}} - (m-1)\frac{d\sin(\vartheta_{k})}{c} = \tau_{k}^{\mathsf{Rx}} - (m-1)\frac{\varphi_{k}^{\mathsf{Rx}}}{f_c}, \label{eqn:7}
\end{align}
where $\chi,\vartheta_k$ denote the angle of arrival (AoA) from BS to IRS and the angle of departure (AoD) from IRS to UE-$k$, respectively, and $c$ is the speed of light. Further, $\varphi^{\mathsf{Tx}} \triangleq \frac{d\sin(\chi)}{\lambda_c}$, $\varphi_{k}^{\mathsf{Rx}} \triangleq \frac{d\sin(\vartheta_{k})}{\lambda_c}$ represent the normalized AoA and AoD, respectively, where $\lambda_c$ is the carrier wavelength.  
Substituting \eqref{eqn:6} and \eqref{eqn:7} in \eqref{eqn:5}, we get
\begin{equation}\label{eqn:8}
    h_{k}(t) = \sum\nolimits_{m=1}^{M} e^{j\phi_m} 
     \alpha \beta_{k} e^{-j2\pi (m-1) (\varphi^{\mathsf{Tx}} - \varphi_{k}^{\mathsf{Rx}})}   \delta(t- \tau_{m}^{\mathsf{Tx}} - \tau_{m,k}^{\mathsf{Rx}}),
\end{equation}
where $\alpha \triangleq \overline{\alpha} e^{-j2\pi f_c \tau^{\mathsf{Tx}}}$, and
$\beta_{k} \triangleq \overline{\beta}_{k}e^{-j2\pi f_c \tau_{k}^{\mathsf{Rx}}}$ are the complex channel gains. Further, we model $\alpha \sim \mathcal{CN}(0,\rho_1G_{\mathsf{Tx}})$ and $\beta_k \sim \mathcal{CN}(0,\rho_{2,k}G_{\mathsf{Rx}})$, where $\rho_1, \rho_{2,k}$ represent the link path losses, and $G_{\mathsf{Tx}}, G_{\mathsf{Rx}}$ denote the BS and UE antenna gains, respectively. Now, since the channel in~\eqref{eqn:8} has a finite delay spread, we analyze the frequency representation of the channel. Using the Fourier transform, the channel response at UE-$k$ on frequency $f$ can be written as 
\begin{equation}\label{eqn:9}
    H_{k}(f) = \sum\nolimits_{m=1}^M e^{j \phi_m} \gamma_{k}^{\mathsf{C}} e^{-j2\pi (m-1) \varphi_{k}^{\mathsf{C}}\left(1+\frac{f}{f_c}\right)}e^{-j2\pi f \tau_{
      k}^{\mathsf{C}}},
\end{equation}
where 
    $\tau_{k}^{\mathsf{C}} = \tau^{\mathsf{Tx}} + \tau_{k}^{\mathsf{Rx}}$, 
    $\varphi_{k}^{\mathsf{C}} = \varphi^{\mathsf{Tx}} - \varphi_{k}^{\mathsf{Rx}}$, 
    $\gamma_{k}^{\mathsf{C}} = \alpha \beta_{k}$, represent the delay,  angle, and the complex gain of the cascaded channel from BS to UE-$k$, respectively.
From the network viewpoint, since the cascaded angles at different UEs are randomly distributed, we have $\varphi_k^{\mathsf{C}} \stackrel{d}{=} \tilde{\varphi}_k^{\mathsf{C}} \widesim[2]{\text{i.i.d.}} \mathcal{U}[-1,1]$, where $\stackrel{d}{=}$ stands for  ``equal in distribution"~\cite{Yashvanth_TCOM_2024}. Now, we define the array steering vector of a ULA at frequency $f$ and angle $\varphi_{k}^{\mathsf{C}}$ as $\mathbf{a}\left((1+\frac{f}{f_c}) \varphi_{k}^{\mathsf{C}}\right) \triangleq \left[1,e^{-j2\pi \left(1+\frac{f}{f_c}\right)\varphi_{k}^{\mathsf{C}}}, \ldots,e^{-j2\pi(M-1) \left(1+\frac{f}{f_c}\right)\varphi_{k}^{\mathsf{C}}}\right]^T $, and compactly write the effective channel from BS to UE-$k$ on frequency $f$ as
\begin{equation}\label{eqn:13}
    H_k(f) = \gamma_{k}^{\mathsf{C}} \boldsymbol{\theta}^T\mathbf{a}\left(\left(1+{f}/{f_c}\right) \varphi_{k}^{\mathsf{C}}\right) e^{-j2\pi f \tau_{k}^{\mathsf{C}}},
\end{equation}
where $\boldsymbol{\theta} \triangleq [e^{j\phi_1},\ldots,e^{j\phi_M}]^T$. From~\eqref{eqn:13}, the channel gain at UE-$k$ on frequency $f$ is given by

\begin{equation}\label{eqn_ch_gain}
 |H_k(f)|^2 = \left|\gamma_{k}^{\mathsf{C}}\right|^2 \bigg|\boldsymbol{\theta}^T\mathbf{a}\bigg(\bigg(1+\frac{f}{f_c}\bigg) \varphi_{k}^{\mathsf{C}}\bigg)\bigg|^2.
\end{equation}
Therefore, in OFDM, for the IRS to form a beam at the $n$th SC ($n \in [N]$), the optimal phase of $m$th IRS element is given by\footnote{The phase shifters at the IRS has a frequency flat response; so we tune the IRS to a single frequency. As a result, the IRS cannot perfectly beamform on all SCs unless the channel it optimizes also has a frequency flat response~\cite{Yashvanth_SPCOM_2022}.}
\begin{equation}\label{eqn:16}
    \phi_m = 2\pi(m-1)(\varphi_{k}^{\mathsf{C}})\bigg(1+\frac{f_n}{f_c}\bigg) ,
\end{equation}
where $f_n = \frac{nW}{N} - \frac{W}{2}-\frac{W}{2N}$ is the baseband frequency of $n$th SC. Then using~\eqref{eqn:16} in~\eqref{eqn_ch_gain}, the gain at UE-$k$ on frequency $f$ becomes

 \begin{equation}
     |H^{\mathrm{opt}}_{k}(f)|^2 \approx M^2\left|\gamma_{k}^{\mathsf{C}}\right|^2\sin\!\mathrm{c}^2\left(M(f_n-f)\varphi_{k}^{\mathsf{C}}\Big/f_c\right).
 \end{equation}
Clearly, the channel gain is maximum at $\mathcal{O}(M^2)$ when $f=f_n$ or $\varphi_{k}^{\mathsf{C}}=0$. On SCs with $f\neq f_n$, the gain substantially reduces and even goes to $0$ on some SCs. This is called the \emph{beam-split} effect and leads to degradation of array gain at the UE, as illustrated in Fig.~\ref{fig:Beam_Squint_demo_1}. 
\subsection{Problem Statement}
From the above discussion, allotting the full BW to any single UE (unless $\varphi_{k}^{\mathsf{C}}=0$) gives rise to B-SP and degrades performance. However, in an OFDMA scheme, multiple UEs are served simultaneously by multiplexing them on different SCs~\cite{Beixiong_TWC_2020_OFDMA,Yang_WCL_2020_OFDMA}. This fits naturally into the constraints imposed by the B-SP effect since no single UE is typically allotted the full BW. Hence, we propose to exploit the B-SP effect and achieve a full array gain of $M^2$ on all SCs via OFDMA. 
In this context, our goal is to maximize the system throughput obtained over $T$ time slots with respect to all schedulers $\mathsf{SCH}(n,t)$, where $\mathsf{SCH}: [N] \times [T] \rightarrow [K]$ maps the SC index $n$ and time slot $t$ to a UE index $k$. Mathematically, the optimization problem is
\begin{equation}\label{Prob_statement_Equation}
\hspace{-0.3cm}\max_{\mathsf{SCH}(n,t)} \frac{1}{T}\sum_{t = 1}^T \sum_{n = 1}^N \frac{W}{N}\log_2\left(1 + \frac{P}{N\sigma^2}|H(\mathsf{SCH}(n,t),t,f_n)|^2\right)\!,\!\!\! \tag{P0}
\end{equation}
where $P$ is the total transmit power, $\sigma^2$ is the noise variance per SC, and $H(k,t,f)$ is the channel at UE-$k$ on frequency $f$ at time $t$. 
In the next section, we solve for the optimal scheduler $\mathsf{SCH}(\cdot)$ and subsequently analyze the effectiveness of the scheduler in exploiting the B-SP effect to boost the performance of an OFDMA system.

  \begin{figure}
  \vspace{-0.2cm}
     \centering
     \includegraphics[width=0.9\linewidth]{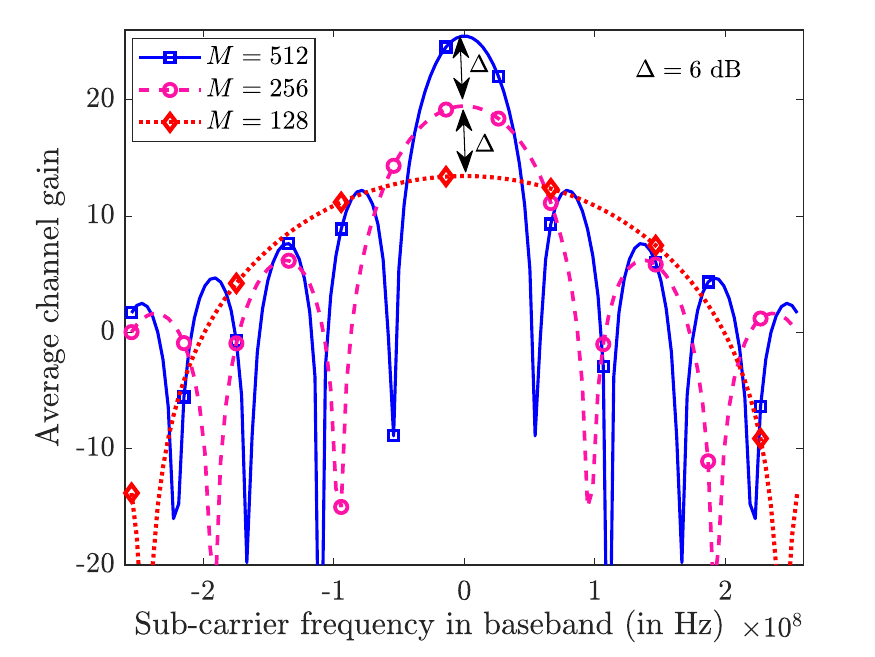}
     \caption{Average channel gain vs. SC frequency when IRS is optimized to $f_n=0$ for different number of IRS elements, $M$ with $W=512$ MHz. Although the gain at $f_n=0$ increases with $M$, the gain on other SCs proportionately degrades as $M$ increases due to the B-SP effect.}
     \label{fig:Beam_Squint_demo_1}
     \vspace{-0.2cm}
\end{figure}

\section{Exploiting Beam Split for Opportunistic OFDMA }\label{sec:problem-statement-soln}
From~\eqref{Prob_statement_Equation}, since the scheduler $\mathsf{SCH}(n,t)$ is decoupled across SCs and time slots, the optimal scheduler as the solution to~\eqref{Prob_statement_Equation} is the one which maximizes the rate/gain on each SC in every slot, i.e., 
\begin{equation}\label{eq_max_rate}
   \forall n \in [N], t\in[T], \  \mathsf{SCH}^{\mathrm{opt}}(n,t) = \argmax_{k \in [K]}\ |H(k,t,f_n)|^2.
\end{equation}
Since multiple UEs are served simultaneously, tuning the IRS configuration to a particular UE leads to sub-optimal performance. Thus, we propose to randomly configure the IRS from a distribution that is aware of the channel model to which the IRS is optimized in every time slot. Specifically, using~\cite[Sec.~III.C]{IRS-OC}, the random IRS configurations are sampled independently across time as
\begin{equation}\label{eqn:22}
    \phi_m(t) = 2\pi (m-1)a(t), \ a(t)\widesim[2]{\text{i.i.d.}} \mathcal{U}[-1,1], \ t\in[T].
\end{equation}
\textcolor{black}{Then, using~\eqref{eqn:22} in~\eqref{eqn:9}, we can show that the channel gain at an arbitrary UE-$q$ on SC-$n$ at time slot $t$ is given by
\begin{equation*}
\left|H(q,t,f_n)\right|^2 = M^2\left|\gamma_{q}^{\mathsf{C}}\right|^2\sin\!\mathrm{c}^2\left(M\left(a(t)-\varphi_q^{\mathsf{C}}\left(1+f_n/f_c\right)\right)\right).
\end{equation*} Thus, for a given $a(t)$ (at time $t$), if there are a large number of UEs, $\forall \ n\in[N]$, there will exist at least one UE-$q_n$ ($q_n \in [K]$) on SC $n$,  such that $a(t)-\varphi_{q_n}^{\mathsf{C}}\left(1+f_n/f_c\right) \approx 0$, and can nearly achieve the full array gain of $M^2$. Then, by scheduling such a UE on every SC, the randomly chosen IRS configuration will be near-optimal to all the UEs scheduled over the full BW~\cite{IRS-OC}.
This way, in the mmWave bands with large BW, every SC witnesses the optimal array gain of $M^2$ (and hence the throughput), in spite of B-SP, by exploiting multi-user diversity via opportunistic OFDMA.}
 
We next formally prove that under the max-rate scheduler, the array gain on every SC, as seen by the network, approaches the maximum value of $M^2$ by leveraging the multi-user diversity and B-SP effects. 
 \begin{theorem}\label{thm_prob_success}
  Let $\mathcal{A}_{k,n}^\epsilon$ denote the event that the array gain on SC-$n$ at UE-$k$ is at least $(1-\epsilon)M^2$ at some time $t$. Then, using a max-rate scheduler with randomized IRS configurations sampled as per~\eqref{eqn:22},
  \begin{equation}\label{eqn_prob_succ}
 \hspace{-0.2cm}P^{\epsilon}_{\textrm{succ}} \triangleq \mathsf{Pr}\left(\bigcap_{n = 1}^{N}\bigcup_{k = 1}^{K} \mathcal{A}_{k,n}^{\epsilon}\right) \geq 1 - N\left(1-\frac{\sqrt{3\epsilon}}{\pi M \left(1+\frac{W}{2f_c}\right)}\right)^{\!\!K}\!\!\!\!.\!
  \end{equation}
 \end{theorem}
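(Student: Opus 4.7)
My plan is to prove the bound in three steps: (i) reduce the event $\mathcal{A}_{k,n}^\epsilon$ to a simple geometric condition on $(a(t),\varphi_k^{\mathsf{C}})$, (ii) use conditional independence of the per-UE events given the IRS randomness, and (iii) apply a union bound across SCs. For step~(i), from the displayed formula for $|H(q,t,f_n)|^2$ just before the theorem, $\mathcal{A}_{k,n}^\epsilon$ is the event $\sin\!\mathrm{c}^2(M(a(t)-\mu_n\varphi_k^{\mathsf{C}}))\geq 1-\epsilon$, where $\mu_n := 1+f_n/f_c$. Starting from the classical inequality $\sin y \geq y-y^3/6$ for $y\geq 0$ and squaring, one gets $\sin\!\mathrm{c}^2(x)\geq 1-(\pi x)^2/3$ on a neighbourhood of the origin, so a sufficient condition for $\mathcal{A}_{k,n}^\epsilon$ is $|a(t)-\mu_n\varphi_k^{\mathsf{C}}|\leq \delta$, with $\delta := \sqrt{3\epsilon}/(\pi M)$. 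This provides a lower bound on $\Pr(\mathcal{A}_{k,n}^\epsilon)$, which is all the theorem requires.

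For step~(ii), I would condition on $a(t)=a_0$. Since the cascaded angles $\varphi_k^{\mathsf{C}}$ are i.i.d.\ $\mathcal{U}[-1,1]$ and independent of $a(t)$, the reduced events are conditionally independent across $k$, each with conditional probability $p_n(a_0) = \tfrac{1}{2}\bigl|[(a_0-\delta)/\mu_n,(a_0+\delta)/\mu_n]\cap[-1,1]\bigr|$. A direct computation shows that when $\mu_n\geq 1+\delta$ and $a_0\in[-1,1]$, the interval is entirely contained in $[-1,1]$, so $p_n(a_0)=\delta/\mu_n\geq \delta/(1+W/(2f_c))=:p^{\ast}$, which is exactly the quantity appearing in the theorem. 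Hence $\Pr(\bigcap_k\overline{\mathcal{A}_{k,n}^\epsilon}\mid a(t)=a_0) = (1-p_n(a_0))^K \leq (1-p^{\ast})^K$, and taking expectation over $a_0$ preserves the bound. Step~(iii) is then a union bound over the $N$ SCs, giving $\Pr\bigl(\bigcup_n\bigcap_k\overline{\mathcal{A}_{k,n}^\epsilon}\bigr)\leq N(1-p^{\ast})^K$, which rearranges to the claimed inequality for $P^{\epsilon}_{\textrm{succ}}$.

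The main obstacle will be justifying the uniform lower bound $p_n(a_0)\geq p^{\ast}$ for \emph{all} $a_0$ in the support of $a(t)$. When $\mu_n<1$ (i.e.\ $f_n<0$) and $|a_0|\in(\mu_n+\delta,1]$, the interval $[(a_0-\delta)/\mu_n,(a_0+\delta)/\mu_n]$ lies outside $[-1,1]$ and $p_n(a_0)=0$, so the uniform bound fails at the edges; a direct Jensen-style averaging in fact gives the wrong direction, which is why the conditional-independence route, together with the uniform bound, is essential. In the practical mmWave regime with $W/f_c\ll 1$, this exceptional set of $a_0$ has Lebesgue measure $O(W/f_c)$ and can be absorbed into constants; alternatively, the sampling distribution of $a(t)$ in~\eqref{eqn:22} can be narrowed to $\mathcal{U}[-(1-W/(2f_c)),1-W/(2f_c)]$ to guarantee that the interval is always interior. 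A minor secondary concern is making the $\sin\!\mathrm{c}^2$ Taylor bound rigorous beyond a small neighbourhood of the origin, which is routine once $\epsilon$ is constrained to a sensible range.
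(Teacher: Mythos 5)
Your proposal is correct and follows essentially the same route as the paper's own proof: De Morgan's law plus a union bound over the $N$ SCs, independence of the per-UE events across $k$ (which the paper asserts outright and you justify more carefully by conditioning on $a(t)=a_0$), the Taylor/sinc reduction of $\mathcal{A}_{k,n}^{\epsilon}$ to $\left|a(t)-\left(1+f_n/f_c\right)\varphi_k^{\mathsf{C}}\right|\leq \sqrt{3\epsilon}/(\pi M)$, and the worst-case substitution $1+f_n/f_c\leq 1+W/(2f_c)$ via $|f_n|\leq W/2$. The boundary-truncation issue you flag --- that the conditional probability $p_n(a_0)$ can fall below $\delta/\mu_n$, even to zero, when the interval $[(a_0-\delta)/\mu_n,(a_0+\delta)/\mu_n]$ protrudes outside $[-1,1]$ (notably for $f_n<0$, where $\mu_n<1$) --- is genuine and is silently ignored in the paper's computation of $\mathsf{Pr}\left((\mathcal{A}_{k,n}^{\epsilon})^{\mathrm{c}}\right)=1-\sqrt{3\epsilon}\big/\pi M(1+f_n/f_c)$, so your explicit uniform bound for $\mu_n\geq 1+\delta$ and your proposed fixes (absorbing the measure-$\mathcal{O}(W/f_c)$ exceptional set, or narrowing the support of $a(t)$) constitute a strengthening of the published argument rather than a divergence from it.
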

 \begin{proof}
 Note that the event $\bigcap_{n = 1}^{N}\bigcup_{k = 1}^{K} \mathcal{A}_{k,n}^{\epsilon}$ collects the set of outcomes such that on every SC $n=1,\ldots, N$, there exists at least one UE among the $K$ UEs such that the array gain at that UE is at least $(1-\epsilon)M^2$. We then make the following observations.
 \begin{align}
 &P^{\epsilon}_{\textrm{succ}} = 1 - \mathsf{Pr}\left(\left(\bigcap_{n = 1}^{N}\bigcup_{k = 1}^{K} \mathcal{A}_{k,n}^{\epsilon}\right)^{\textrm{c}}\right)  \stackrel{(a)}{=} 1 -\mathsf{Pr}\left(\bigcup_{n=1}^N \bigcap_{k=1}^K (\mathcal{A}_{k,n}^{\epsilon})^{\textrm{c}}\right)\nonumber \\
        &\hspace{-0.2cm} \stackrel{(b)}{\geq} 1 - \sum_{n=1}^N \mathsf{Pr}\left(\bigcap_{k=1}^K {(\mathcal{A}_{k,n}^{\epsilon})}^{\textrm{c}}\right) \stackrel{(c)}{=} 1 - \sum_{n=1}^N \prod_{k=1}^{K}\mathsf{Pr}\left({(\mathcal{A}_{k,n}^{\epsilon})}^{\textrm{c}}\right)  \label{eqn_prob_simple_template},
        \end{align}
        where in $(a)$, we used the De-Morgan's law, in $(b)$, we used the union bound, and in $(c)$, we noted that $\left\{\mathcal{A}_{k,n}^{\epsilon}\right\}_{k=1}^{K}$ are independent events. Now, at any given $t$, and hence the IRS configuration, ${(\mathcal{A}_{k,n}^{\epsilon})}^{\textrm{c}}$       
          \begin{align}
        &= \left\{\varphi_{k}^{\mathsf{C}} \in [-1,1]: \sin\!\mathrm{c}^2\left(M\left(a(t)-\left(1+f_n/f_c\right)\varphi_{k}^{\mathsf{C}}\right)\right) \leq 1-\epsilon \right\} \nonumber \\
        &\stackrel{(d)}{=} \left\{\varphi_{k}^{\mathsf{C}} \in [-1,1] : \bigg|a(t)-\left(1+f_n/f_c\right)\varphi_{k}^{\mathsf{C}}\bigg| \geq \sqrt{3\epsilon}/\pi M  \right\}   \\
        &=\left \{\varphi_{k}^{\mathsf{C}} \in [-1,1]: \varphi_{k}^{\mathsf{C}} \notin a(t) \pm \frac{\sqrt{3\epsilon}}{\pi M}\Big/1+f_n/f_c \right\},
    \end{align}
    where in $(d)$, we used Taylor's approximation: $\sin\!\mathrm{c}^2(x)\approx1-\pi^2x^2/3$, which is tight under the regime of interest to us.
    Now, when $\varphi_{k}^{\mathsf{C}} \sim \mathcal{U}[-1,1]$, we can show that $\mathsf{Pr}\left((A_{k,n}^{\epsilon})^{\textrm{c}}\right)=1-\sqrt{3\epsilon}\big/\pi M (1+f_n/f_c)$. Recognizing that $|f_n|\leq W/2, \ n \in [N]$, we further bound $\mathsf{Pr}\left((A_{k,n}^{\epsilon})^{\textrm{c}}\right)$ and plugging the resulting  in~\eqref{eqn_prob_simple_template} yields~\eqref{eqn_prob_succ}.
 \end{proof}
From Theorem~\ref{thm_prob_success}, the probability, $P^{\epsilon}_{\textrm{succ}}$ is non-decreasing in the number of UEs, $K$. In fact, in the limiting scenario, we see that
 \begin{equation}
 \lim_{K\rightarrow \infty} \mathsf{Pr}\left(\bigcap_{n = 1}^{N}\bigcup_{k = 1}^{K} \mathcal{A}_{k,n}^{\epsilon}\right) = 1.
 \end{equation}
Hence, with probability $1$, every SC will witness a $\mathcal{O}(M^2)$ array gain by exploiting the B-SP effect when a max-rate scheduler is employed over many UEs. We now state our next result, which characterizes the achievable system throughput when all UEs have equal path losses.\footnote{We use equal path loss across UEs only to obtain a tractable analytic scaling of the throughput as a function of system parameters, similar to~\cite{Nadeem_TWC_2021}.}
 \begin{theorem}\label{thm:Rate_scaling}
     The system throughput of an OFDMA scheme in a randomly configured IRS (as per~\eqref{eqn:22}) aided wideband system using a max-rate scheduler and equal power allocation across SCs satisfies
    \begin{equation*}\label{eqn:31}
     \!\!\!   \lim_{K \rightarrow \infty} \bigg(R^{(K)} - \mathcal{O}\left\{W \log_2\left(1 + \frac{\rho G_{\mathsf{Tx}}G_{\mathsf{Rx}} P}{N\sigma^2}M^2 \ln{K}\right)\right\}\bigg) = 0.
    \end{equation*}
    where $\rho \triangleq \rho_1\rho_2$ is the common cascaded path loss across the UEs.
\end{theorem}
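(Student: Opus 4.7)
The plan is to express $R^{(K)}$ under the optimal scheduler from~\eqref{eq_max_rate} as a sum of expected per-SC rates, and show that these concentrate on the claimed scaling by combining Theorem~\ref{thm_prob_success} with the classical Gumbel extreme-value limit for i.i.d.\ exponentials. Substituting the sinc-form of $|H(k,t,f_n)|^2$ from the display preceding Theorem~\ref{thm_prob_success} into~\eqref{Prob_statement_Equation} yields
\begin{equation*}
R^{(K)} = \sum_{n=1}^{N}\frac{W}{N}\,\mathbb{E}\!\left[\log_2\!\left(1 + \frac{PM^2|\alpha|^2}{N\sigma^2}\max_{k\in[K]}|\beta_k|^2\, s_{k,n}\right)\right]\!,
\end{equation*}
where $s_{k,n}\triangleq\mathrm{sinc}^2(M(a(t)-(1+f_n/f_c)\varphi_k^{\mathsf{C}}))$. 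The structural fact I would exploit is that $\beta_k$ and $\varphi_k^{\mathsf{C}}$ are independent, whereas $\alpha$ is common to all UEs.

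Next, I would restrict attention to the high-probability event $\mathcal{E}_\epsilon\triangleq\bigcap_{n}\bigcup_{k}\mathcal{A}_{k,n}^{\epsilon}$ of Theorem~\ref{thm_prob_success}. Its proof gives $\mathsf{Pr}(\mathcal{A}_{k,n}^{\epsilon}) = \sqrt{3\epsilon}/(\pi M(1+f_n/f_c)) = \Theta(1/M)$, so the aligned subset $\mathcal{K}_n\subseteq[K]$ on SC~$n$ is binomial with size $K_n=\Theta(K)$ with probability $\to 1$. Since $\{|\beta_k|^2\}_k$ is independent of $\{\varphi_k^{\mathsf{C}}\}_k$, conditional on $\mathcal{K}_n$ the aligned magnitudes $\{|\beta_k|^2\}_{k\in\mathcal{K}_n}$ are i.i.d.\ $\mathrm{Exp}(1/(\rho_2 G_{\mathsf{Rx}}))$. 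On $\mathcal{E}_\epsilon$, I would use the sandwich
\begin{equation*}
(1-\epsilon)\!\max_{k\in\mathcal{K}_n}|\beta_k|^2 \;\le\; \max_{k\in[K]}|\beta_k|^2 s_{k,n} \;\le\; \max_{k\in[K]}|\beta_k|^2,
\end{equation*}
reducing the problem to the extreme-value statistic of i.i.d.\ exponentials.

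The Gumbel limit gives $\max_{k\le K'}|\beta_k|^2 = \rho_2 G_{\mathsf{Rx}}\ln K' + O_p(1)$; with $\ln K_n = \ln K + O(1)$, this yields $\max_k|\beta_k|^2 s_{k,n} = \rho_2 G_{\mathsf{Rx}}\ln K\,(1+o(1))$ on $\mathcal{E}_\epsilon$. Since $\log_2(1+x)\sim\log_2 x$ as $x\to\infty$, the per-SC rate becomes $(W/N)\,\mathbb{E}[\log_2(PM^2|\alpha|^2\rho_2 G_{\mathsf{Rx}}\ln K/(N\sigma^2))]+o(1)$. Using $\mathbb{E}[\log_2|\alpha|^2] = \log_2(\rho_1 G_{\mathsf{Tx}}) - \gamma_{\mathrm{EM}}/\ln 2$, a finite $K$-independent constant, summing over the $N$ SCs, and identifying $\rho=\rho_1\rho_2$ recovers the leading term $W\log_2(1+PM^2\rho G_{\mathsf{Tx}}G_{\mathsf{Rx}}\ln K/(N\sigma^2))$, with the additive constant absorbed into $\mathcal{O}\{\cdot\}$.

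The main obstacle I anticipate is rigorously pulling $|\alpha|^2$ out of the outer logarithm: because $\alpha$ is common to all UEs, it does not benefit from multi-user averaging, so the argument requires dominated convergence combined with $\log(1+cz\ln K)-\log(c\ln K)-\log z\to 0$ a.s.\ as $K\to\infty$ for $z=|\alpha|^2>0$. A secondary, easier issue is controlling the complementary event $\mathcal{E}_\epsilon^{\mathrm{c}}$; Theorem~\ref{thm_prob_success} gives $\mathsf{Pr}(\mathcal{E}_\epsilon^{\mathrm{c}})\le N(1-c/M)^K$, exponentially small in $K$, so its contribution to $R^{(K)}$ vanishes and is easily dominated.
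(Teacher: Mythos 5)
Your proposal is correct, and it reaches the theorem by a genuinely different route than the paper. The paper's proof first applies Jensen's approximation to move the expectation inside $\log_2(\cdot)$ in~\eqref{eq_init_E_val}, then treats the normalized cascaded gains $\tilde{\gamma}_k^{\mathsf{C}}$ as i.i.d.\ $\mathcal{P}(0,1)$ (products of two complex normals) across UEs and invokes~\cite[Lemma~3]{IRS-OC}: the order statistic $\max_k|\gamma_k^{\mathsf{C}}|^2$ grows like $l_K$ with $F(l_K)=1-1/K$, $F(z)=1-2\sqrt{z}K_1(2\sqrt{z})$, whose inverse is fitted numerically as $l_K=(0.7498\,\ln K)^{1.71}$; the resulting $(t\ln K)^q$ growth is then absorbed into the $\mathcal{O}\{\cdot\}$ of the statement. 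You instead exploit the structure that is actually imposed by the model of Section~\ref{sec:sys_model}, namely $\gamma_k^{\mathsf{C}}=\alpha\beta_k$ with the BS--IRS gain $\alpha$ common to all UEs, so that the scheduler only maximizes $|\beta_k|^2 s_{k,n}$ over $k$; conditioning on the $\Theta(K)$-sized aligned set (binomial concentration, with selection independent of the $\beta_k$'s) and using the $(1-\epsilon)$ sandwich, the Gumbel limit for i.i.d.\ exponentials gives $\rho_2 G_{\mathsf{Rx}}\ln K\,(1+o(1))$ exactly, and $|\alpha|^2$ is pulled out through its finite log-moment with dominated convergence. Your route buys three things: a closed-form $\ln K$ that matches the theorem statement literally (the paper's own proof yields $(\ln K)^{1.71}$, which matches only up to the order notation); avoidance of the numerical inversion of the Bessel-type CDF; and replacement of the paper's Jensen step by an almost-sure asymptotics argument, which is somewhat more rigorous. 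It is also worth noting that with shared $\alpha$ the $\tilde{\gamma}_k^{\mathsf{C}}$ are identically distributed but \emph{not} independent across $k$, so the paper's i.i.d.\ $\mathcal{P}(0,1)$ step is an extra modeling assumption your argument does not need; what the paper's route buys in exchange is a shorter derivation and consistency with the double-Rayleigh order statistics of~\cite{IRS-OC}, under which the multi-user diversity factor is slightly larger. Your treatment of the complementary event is sound as well: by Theorem~\ref{thm_prob_success} its probability is exponentially small in $K$ while the per-slot rate grows only polylogarithmically in $K$, so its contribution vanishes.
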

    \begin{proof} Note that the system throughput with a max-rate scheduler is 
\begin{equation}\label{eqn:32}
    \hspace{-0.1cm}R_{\textrm{MR}} \triangleq \frac{1}{T}\sum_{t = 1}^T \sum_{n = 1}^N \frac{W}{N}\log_2\left(1 + \frac{P}{N\sigma^2}\max_{k \in [K]}|H\big(k, t, f_n)|^2\right).
\end{equation}
Now, considering that the channels are ergodic and using Jensen's approximation over the $\log_2(\cdot)$ function, we simplify \eqref{eqn:32} as
\begin{equation}\label{eq_init_E_val}
\hspace{-0.1cm}\mathbb{E}[R_{\textrm{MR}}] \approx  \sum_{n=1}^N \frac{W}{N} \log_2\left(1 + \frac{P}{N\sigma^2} \mathbb{E}\left[\max_{k \in [K]}|H(k,t, f_n)|^2\right]\right).
\end{equation}
To characterize the expectation in~\eqref{eq_init_E_val}, we can show $ |H(k,t,f_n)|^2$
\begin{align}
             &   \approx M^2 \rho_1\rho_2G_{\mathsf{Tx}}G_{\mathsf{Rx}}\left|\tilde{\gamma}^{\mathsf{C}}_{k}\right|^2\sin\!\mathrm{c}^2(M (a(t)-\varphi^{\mathsf{C}}_{k}(1+f_n/f_c)),\label{eq_channel_gain_with_sinc}
 \end{align}
 where we define $\gamma^{\mathsf{C}}_{k} = \sqrt{\rho_1\rho_2G_{\mathsf{Tx}}G_{\mathsf{Rx}}}\tilde{\gamma}^{\mathsf{C}}_{k}$ with $\rho_2 = \rho_{2,k} \forall k$ under the equal path loss assumption. Here, $\tilde{\gamma}^{\mathsf{C}}_{k} \widesim[2]{\text{i.i.d.}} \mathcal{P}(0,1)$, where $\mathcal{P}(\mu,\sigma^2)$ represents the distribution of product of two complex normal random variables with mean $\mu$ and variance $\sigma^2$. 
 From Theorem~\ref{thm_prob_success}, as $K \rightarrow \infty$, we know that at least one UE will be near the beamforming configuration on all SCs. Then, the maximum of terms given in~\eqref{eq_channel_gain_with_sinc} is the maximum over the channel gains among those UEs for whom the IRS phases are in near-beamforming configurations. So, we have
 \begin{equation}\label{eq_expected_OS}
 \mathbb{E}\left[\max_{k \in [K]}|H(k,t, f_n)|^2\right] \approx M^2 \rho_1\rho_2G_{\mathsf{Tx}}G_{\mathsf{Rx}}\mathbb{E}\left[\max_{k\in [k]} \left|\tilde{\gamma}^{\mathsf{C}}_{k}\right|^2\right] + \mathcal{O}(1).
 \end{equation}
 Now, to characterize the expected value of the order statistic in~\eqref{eq_expected_OS}, using~\cite[Lemma~$3$]{IRS-OC} for large $K$, we can show that  $\max_{k\in [k]} \left|\gamma^{\mathsf{C}}_{k}\right|^2$ grows as $l_K$, where $F(l_K) = 1-\frac{1}{K}$ with $F(z) = 1-2\sqrt{z}K_1(2\sqrt{z})$ and $K_1(\cdot)$ is the first order modified Bessel function of the second kind. Since inverting $F(z)$ is difficult, we numerically approximate its inverse in the region of interest and obtain $l_K = (t\ln{K})^q$, where $t=0.7498$ and $q=1.71$. Then, from~\eqref{eq_init_E_val}, the throughput scales as
 \begin{equation}
\!\!\!\!R^{(K)} \triangleq \mathbb{E}[R_{\textrm{MR}}] \lesssim W\log_2\left(1+\frac{\rho G_{\mathsf{Tx}}G_{\mathsf{Rx}}P}{N\sigma^2}M^2 (t\ln{K})^q\right),
 \end{equation}
which is restated in the statement of the theorem.
    \end{proof}
 The above theorem confirms that a channel gain of $\mathcal{O}(M^2)$ can be obtained on all SCs by exploiting the B-SP effects at the IRS. Moreover, we get an additional boost by a factor of $\ln K$, which is the benefit of multi-user diversity, similar to that derived in~\cite[Theorem~$4$]{IRS-OC} for an opportunistic OFDMA in sub-$6$ GHz systems. 
 
 Note that Theorem~\ref{thm:Rate_scaling} is obtained using a large number of UEs to seek full array gain of $M^2$ on all SCs. However, a practically useful aspect to understand is the implication of the scheme for a finite number of UEs in the system. We have the following result. 
\begin{corollary}\label{thm:user_scaling}
    The number of users $K_{\textrm{min}}$ required in the system so that, on every SC, at least one UE obtains an array gain of $(1-\epsilon)M^2$ with probability at least $1-\delta$ must satisfy
    \begin{equation}\label{eqn_min_K}
      K   \geq K_{\textrm{min}} \triangleq -\ln\left(N/\delta\right)\Big/\ln{\Big(1-\frac{\sqrt{3\epsilon}}{\pi M(1+W/2f_c)}\bigg)}.
    \end{equation}
\end{corollary}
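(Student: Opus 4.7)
The plan is to invert the lower bound on $P^{\epsilon}_{\textrm{succ}}$ provided by Theorem~\ref{thm_prob_success}. The corollary asks for a sufficient condition on $K$ so that the success probability is at least $1-\delta$, which is exactly what I get by demanding the right-hand side of~\eqref{eqn_prob_succ} be at least $1-\delta$ and solving for $K$.

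First I would enforce the reliability requirement: using Theorem~\ref{thm_prob_success}, a sufficient condition is
\[
1 - N\left(1 - \frac{\sqrt{3\epsilon}}{\pi M(1+W/2f_c)}\right)^{K} \geq 1-\delta,
\]
which rearranges to
\[
\left(1 - \frac{\sqrt{3\epsilon}}{\pi M(1+W/2f_c)}\right)^{K} \leq \frac{\delta}{N}.
\]
Next I would take the natural logarithm of both sides. Since the base lies strictly in $(0,1)$, its logarithm is negative, so dividing through flips the inequality and yields
\[
K \;\geq\; \frac{\ln(\delta/N)}{\ln\!\left(1 - \frac{\sqrt{3\epsilon}}{\pi M(1+W/2f_c)}\right)} \;=\; \frac{-\ln(N/\delta)}{\ln\!\left(1 - \frac{\sqrt{3\epsilon}}{\pi M(1+W/2f_c)}\right)},
\]
which is precisely the expression for $K_{\textrm{min}}$ in~\eqref{eqn_min_K}.

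Since both steps are elementary algebra that invoke Theorem~\ref{thm_prob_success} as a black box, there is no real technical obstacle; the only subtlety is tracking the sign of $\ln(1-x)$ for $x\in(0,1)$, which flips the inequality direction and thereby produces a \emph{lower} bound on $K$. As an optional sanity check, I would apply the first-order approximation $-\ln(1-x)\approx x$ for small $x$, giving the interpretable scaling $K_{\textrm{min}}\approx \pi M(1+W/2f_c)\ln(N/\delta)/\sqrt{3\epsilon}$, which makes transparent that $K_{\textrm{min}}$ grows linearly in $M$ and only logarithmically in $N/\delta$, matching the intuition from Theorem~\ref{thm_prob_success} that the scheme is reliable with a modest pool of UEs.
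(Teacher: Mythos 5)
Your proposal is correct and is exactly the argument the paper intends: its proof is the one-line remark that the result ``follows by setting the probability derived in Theorem~\ref{thm_prob_success} to $1-\delta$ and rearranging the terms,'' which is precisely your inversion of the bound in~\eqref{eqn_prob_succ}, including the sign flip from $\ln(1-x)<0$. Your added first-order sanity check $K_{\textrm{min}}\approx \pi M(1+W/2f_c)\ln(N/\delta)/\sqrt{3\epsilon}$ is a nice touch consistent with the paper's discussion of how $K_{\textrm{min}}$ scales with $M$, $N$, $\epsilon$, and $\delta$.
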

\begin{proof}
The proof follows by setting the probability derived in Theorem~\ref{thm_prob_success} to $1-\delta$ and rearranging the terms.
\end{proof}
From~\eqref{eqn_min_K}, $K_{\textrm{min}}$ increases with $\epsilon$ and $\delta$ decrease, in line with Theorems~\ref{thm_prob_success} and~\ref{thm:Rate_scaling} (see also~\cite[Prop.~$1$]{Yashvanth_ICASSP_2023}.) Also, $K_{\textrm{min}}$ increases with $M$ because the B-SP effect is more pronounced for larger $M$, and to exploit the B-SP we need more UEs. Since we desire full array gain on all SCs simultaneously, $K_{\textrm{min}}$ increases with $N$ as well. 
\begin{remark}
In our proposed scheme, to implement the opportunistic scheduler, the BS should know the index of the UE that obtains the best channel gain on each SC. This can be performed using an efficient low-complexity feedback schemes on a SC-by-SC basis. For e.g., the timer-feedback mechanism in~\cite{timer-scheme} uses a single pilot, and the UE with the best channel quality feeds back to the BS much earlier than other UEs. Therefore, our proposed scheme reaps optimal IRS benefits and obviates channel estimation and phase optimization algorithms, making it attractive for implementation.
\end{remark}

\section{Numerical Results}\label{sec:numerical-results}

\begin{figure*}
\vspace{-0.4cm}
\hspace{-0.7cm}
 \begin{subfigure}{0.33\linewidth}
 \centering
\includegraphics[width=1.13\linewidth]{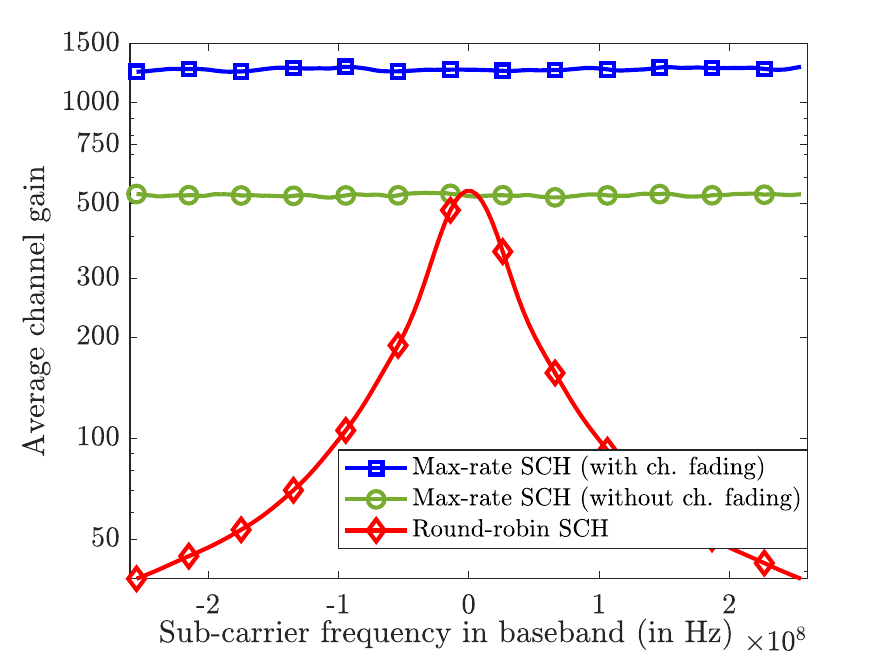}
     \caption{Avg. channel gain with $K=5000$, $M=512$.}
     \label{fig:avg_channel_1}
\end{subfigure}
\hspace{0.05cm}
\begin{subfigure}{0.33\linewidth}
\centering
\includegraphics[width=1.13\linewidth]{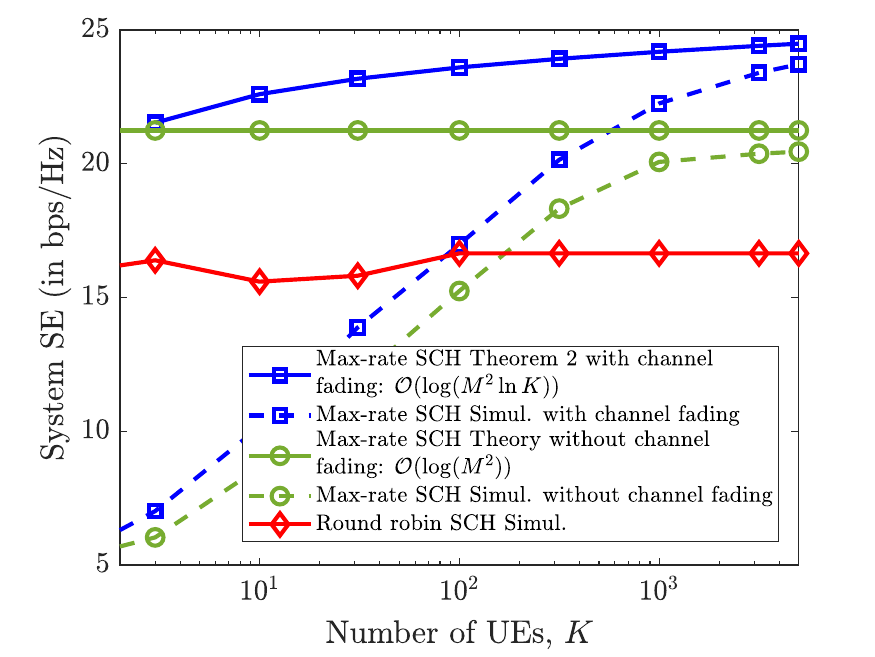}
     \caption{SE vs. No. of UEs, $K$ for $M=512$.}
\label{fig_ergodic_rate}
\end{subfigure}
\hspace{0.05cm}
\begin{subfigure}{0.33\linewidth}
\centering
\includegraphics[width=1.13\linewidth]{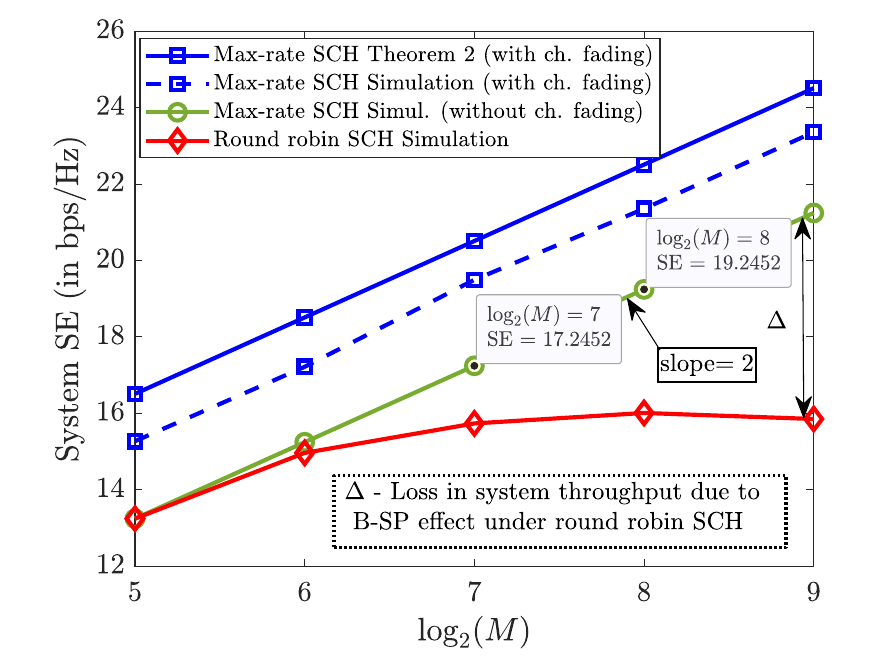}
     \caption{SE vs. IRS elements, $M$ for $K=5000$.}
\label{fig:rate_scaling_elements_1}
\end{subfigure}
\caption{Illustration of exploiting the B-SP effect to enhance the OFDMA performance with the number of SCs, $N=128$.}
\end{figure*}

We numerically illustrate the throughput enhancement obtained by exploiting the B-SP effect via OFDMA using Monte Carlo simulations. The BS is located at $(0,0)$ and an $M = 512$ element IRS is located at $(0, 500)$ (meters). The $K$ users are uniformly distributed around an annular region centered at the IRS with inner and outer radii $0.5$ km and $1$ km, respectively. The path loss exponent for the BS-IRS and IRS-UE links are $2$ and $4$, respectively. The antenna gain at BS and UE are $G_{\mathsf{Tx}} = 20$ dBi and $G_{\mathsf{Rx}} = 10$ dBi, respectively. The transmit power is $P = 40$ dBm, and the noise variance at UE is $\sigma^2 = -110$ dBm, with $N=128$ SCs spanning a bandwidth of $W = 510$ MHz at carrier frequency $f_c = 30$ GHz~\cite{beam-squint-1}.

In Fig.~\ref{fig:avg_channel_1}, we plot the average channel gain versus the baseband SC frequency for three scenarios: $1)$ a single UE is scheduled in every time slot over the total BW using round-robin (RR) scheduling and the IRS is optimized to the scheduled UE at $f_n=0$ in every slot, $2)$ the IRS is randomly tuned as per~\eqref{eqn:22} and the UEs are served over an OFDMA using a max-rate scheduler without considering channel fading effects (i.e., $\gamma_k^{\mathsf{C}}$ is deterministic), and $3)$ the IRS is randomly tuned as per~\eqref{eqn:22} and the UEs are served over an OFDMA using a max-rate scheduler including the random channel fading effects. The array gain using the RR scheduler peaks at $\mathcal{O}(M^2)$ only for $f_n=0$ (corresponding to a channel gain of around $500$ in this case), and it degrades on other SCs due to the B-SP effect. When multiple UEs are multiplexed on an OFDMA, the channel gain, as seen by the BS, flattens over the BW. In particular, when fading is absent, the gain flattens at the peak value itself, indicating the benefit of exploiting B-SP via OFDMA. With fading, the array gain further increases to $\mathcal{O}(M^2\ln (K))$ due to multi-user diversity and clearly surpasses the peak gain obtained with the RR scheduler.

Next, in Fig.~\ref{fig_ergodic_rate}, we evaluate the system spectral efficiency (SE), which is the achievable throughput per unit BW, as a function of the number of UEs for the three scenarios described in the previous paragraph, with $M=512$. The RR SE is smaller than the peak scaling of $\mathcal{O}(\log_2(M^2)$ due to the B-SP effect. Further, the SE does not vary with the number of UEs since RR scheduling does not exploit the multi-user diversity. However, with a max-rate scheduler, the SE increases with the number of UEs, $K$. When channel fading is absent, the SE eventually converges to the optimal rate scaling of $\mathcal{O}(\log_2(M^2))$ on all SCs for a large number of UEs, in line with Theorem~\ref{thm_prob_success}. When fading is present, multi-user diversity provides a further improvement to $\mathcal{O}(\log_2(M^2\ln K))$, in line with Theorem~\ref{thm:Rate_scaling}. Thus, using the B-SP effect with a large number of UEs, opportunistic scheduling of UEs on each SC procures optimal benefits on all SCs simultaneously. 

Finally, in Fig.~\ref{fig:rate_scaling_elements_1}, we illustrate the utility of the proposed scheme by plotting the system SE versus the number of IRS elements (in log scale) for the three scenarios. In all cases, the slope of the curves is equal to $2$ for smaller $M$, indicating that a full array gain of $M^2$ can be obtained on all SCs. However, as $M$ increases, the B-SP effect kicks in,  and the slope of the RR scheduler drops below $2$. Contrariwise, the slope of the curves corresponding to max-rate schedulers stays at $2$, affirming that the network achieves the full array gain from the IRS on all SCs by exploiting the B-SP effect. Also, the varying intercepts with and without fading effects are due to the additional benefits of multi-user diversity (of the order $\mathcal{O}(\log \ln K)$) in the former. 

\vspace{-0.2cm}
\section{Conclusions}\label{sec:conclusions}
In this paper, we demonstrated that the B-SP effect, which arises in IRS-aided wideband systems, can be exploited to enhance the system performance. We deduced that the B-SP effect causes the IRS to beamform at different angles on different SCs. Then, considering an OFDMA framework, we showed that, almost surely, we can obtain the optimal array gain of $M^2$ on all SCs when UEs are scheduled opportunistically using a max-rate scheduler, and there are sufficiently many UEs in the system. Subsequently, we derived the rate scaling laws of the system and proved that not only does our scheme achieve full array gain over the whole BW but also reaps the multi-user diversity gains in the system. Future work can include incorporating fairness in UE scheduling, addressing guaranteed rate requirements, and developing efficient signaling schemes to enable opportunistic scheduling of the UEs.

\bibliographystyle{IEEEtran}
\bibliography{IEEEabrv,IRS_ref_short}
\end{document}